\newtheorem{theorem}{Theorem}
\definecolor{tolwhite}{rgb}{0.97,0.97,0.97}
\definecolor{tolblue}{HTML}{84C3E3}
\definecolor{tolred}{HTML}{CC6677}
\definecolor{tolsig}{HTML}{E69F00}
\newcommand{\hlc}[3][black]{{%
    \colorlet{foo}{#2}%
    \textcolor{#1}{
    \sethlcolor{foo}\hl{#3}}}%
}
 \def\SOUL@hlpreamble{%
 \setul{}{3.5ex}
 \let\SOUL@stcolor\SOUL@hlcolor
 \SOUL@stpreamble
 }
\definecolor{burntorange}{RGB}{200, 90, 0}
\definecolor{gray}{RGB}{150,150,150}
\newtcbox{\mymath}[1][]{%
    nobeforeafter, tcbox raise base,
    enhanced, colframe=blue!30!black,
    colback=blue!30, boxrule=1pt,
    #1}
\newcommand{\algcolor}[2]{%
  \hskip-\ALG@thistlm\colorbox{#1}{\parbox{\linewidth-2\fboxsep}{\hskip\ALG@thistlm\relax #2}}%
}
\title{What are You Weighting For? Improved Weights for Gaussian Mixture Filtering With Application to Cislunar Orbit Determination}
\author{
    {Dalton Durant}\thanks{Ph.D. Student} \\
	Department of Aerospace Engineering \& \\ Engineering Mechanics \\
	The University of Texas at Austin \\
	Austin, TX 78712 \\
	\href{mailto:ddurant@utexas.edu}{\texttt{ddurant@utexas.edu}} \\
	\And
	{Andrey A. Popov}\thanks{Postdoctoral Research Fellow} \\
	The Oden Institute for \\ Computational Engineering \& Sciences \\
	The University of Texas at Austin \\
	Austin, TX 78712 \\
	\href{mailto:andrey.a.popov@utexas.edu}{\texttt{andrey.a.popov@utexas.edu}} \\
    \And
	{Renato Zanetti}\thanks{Associate Professor} \\
	Department of Aerospace Engineering \& \\ Engineering Mechanics \\
	The University of Texas at Austin \\
	Austin, TX 78712 \\
	\href{mailto:renato@utexas.edu}{\texttt{renato@utexas.edu}} \\
}
\begin{document}
\maketitle

\begin{abstract}This work focuses on the critical aspect of accurate weight computation during the measurement incorporation phase of Gaussian mixture filters. The proposed novel approach computes weights by linearizing the measurement model about each component's posterior estimate rather than the the prior, as traditionally done. This work proves equivalence with traditional methods for linear models, provides novel sigma-point extensions to the traditional and proposed methods, and empirically demonstrates improved performance in nonlinear cases. Two illustrative examples, the Avocado and a cislunar single target tracking scenario, serve to highlight the advantages of the new weight computation technique by analyzing filter accuracy and consistency through varying the number of Gaussian mixture components.
\end{abstract}

\keywords{Bayesian Estimation \and Cislunar Orbit Determination \and Gaussian Mixture Models \and Nonlinear Filtering}

\section{INTRODUCTION}

This work is an adaptation of a previous conference paper \cite{ref:durant2024} and has been expanded to include the following new contributions: bayesian recursive weight updates, sigma point weight updates, pseudocode, and a cislunar single target tracking application.

Gaussian mixture-type filters mark a departure from traditional Kalman filtering methods, which prove more effective for systems characterized by Gaussian dynamics and observations \cite{ref:kalman, ref:julier, ref:van, ref:gelb, ref:lefebvre, ref:arasaratnam2007, ref:arasaratnam2009, ref:zanetti2015}. 
In real-world scenarios, such as weather tracking and orbit determination, nonlinearity and multimodality are common; challenging the viability of Gaussian assumptions \cite{ref:anderson, ref:yun2022, ref:yun2022_2, ref:popov, ref:popov2022, ref:popov2023, ref:durant, ref:li2011, ref:reifler2021, ref:reifler2023}. 
Addressing these challenges, Gaussian mixture-type filters tackle the intricacies of non-Gaussian state estimation by representing probability distributions as a weighted sum of Gaussian components \cite{ref:sorenson, ref:alspach, ref:yun2019}.

Accurate weight assignment is crucial as it determines the contribution of each Gaussian component to the overall distribution. Correctly computed weights not only ensure alignment with the true state's distribution, but also enhance filter consistency, enabling reliable tracking of dynamic changes and accommodation of uncertainties. Conversely, inaccurate weight computation can lead to subpar filtering performance, causing divergence, filter degeneracy, and ultimately, inaccurate and inconsistent state estimates. This work introduces a novel method for computing the weights of Gaussian mixture-type filters by linearizing the measurement model about each component's posterior estimate rather than the prior. This approach enhances accuracy in weight computation and requires only minimal computational overhead, maintaining the overall efficiency of the process.

This work is structured as follows: Section \ref{sec:background} provides basic background information and equations regarding the two different linearization techniques; Section \ref{sec:linear} demonstrates that, for linear measurement models, the presented method is equivalent to traditional methods; Section \ref{sec:nonlinear} gives an explanation for improved performance with nonlinear measurement models; Section \ref{sec:the-weights} introduces alternative forms along with associated algorithms; Section \ref{sec:sigma} presents sigma point extensions along with associated algorithms; Section \ref{sec:nrho} empirically demonstrates improved performance applied to a cislunar single target tracking example; Section \ref{sec:summary} contains a summary of this work and highlights possible future extensions for the proposed approach.

\section{BACKGROUND}
\label{sec:background}

This work computes the weights of Gaussian mixture-type filters by linearizing the measurement model about each component's posterior estimate rather than the prior. This not only improves accuracy, but also ensures minimal additional computational burden. In this section, we explore the distinctions between linearizing the measurement model around the prior and the posterior estimates.

Suppose the measurement model is of the form:
\begin{equation}
\label{eqn:meas}
    \begin{aligned}
        y\ &=\ h(x)\ +\ \eta, \\
    \end{aligned}
\end{equation}
where, $\eta$ is zero mean with covariance matrix $R$. Equation \eqref{eqn:meas} relates the system's true state $x$ to the sensor measurements $y$ and in practical scenarios is often nonlinear. 

Any non-Gaussian prior probability density function (PDF) can be approximated as a weighted sum of $n$ Gaussians with associated prior weights $w_i^-$ arbitraraly well \cite{ref:barshalom2001}:
\begin{equation}
    \begin{aligned}
        p(x)\ 
        &\approx\ \sum_{i=1}^{n}\ w_i^-\ p_i(x).
    \end{aligned}
\end{equation}
\begin{equation}
    \begin{aligned}
        p_i(x)\
        &=\ \mathcal{N}(x;\ \bar{x}_i,\ \bar{P}_{xx}^{(i)}). \\
    \end{aligned}
\end{equation}
The posterior distribution is given by
\begin{align}
    p(x|y)\ &\propto \sum_{i=1}^{n}\ w_i^-\ p(y|x)\ p_i(x)
    \\
    &= \sum_{i=1}^{n}\ \underbrace{w_i^-\ p_i(y)}_{\propto w_i^+}\ p_i(x|y)
\end{align}
where $p(y|x)$ is the measurement likelihood, $p_i(x|y)$ is the \textit{component's} posterior and $p_i(y)$ is the \textit{component's} measurement marginal. For nonlinear measurements, it is generally not possible to compute $p_i(y)$ and $p_i(x|y)$ exactly, and some approximations need to be made. A  notable exemption is the linear measurement case, for which a closed form solution exist, and the posterior is also a Gaussian sum.

In the nonlinear measurement case, $p_i(y)$ and $p_i(x|y)$ are typically approximated as Gaussian, most ofter via linearization around each component mean (EKF) or via statistical linear regression using a set of deterministic sigma points (UKF, CKF). We note that approximations need to be made twice; first to compute $p_i(x|y)$ and then to compute $p_i(y)$. In this work, we argue that once the component's posterior $p_i(x|y)$ is calculated, it is a more accurate representation of the unknown state than the prior, and hence any approximation made to compute $p_i(y)$ should rely on $p_i(x|y)$ rather than $p_i(x)$.

The approximation to the posterior is also a GMM:
\begin{equation}
    \begin{aligned}
        p(x|y)\ 
        &\approx\ \sum_{i=1}^{n}\ w_i^+\ p_i(x|y),
    \end{aligned}
\end{equation}
\begin{equation}
    \begin{aligned}
        p_i(x|y)\
        &=\ \mathcal{N}(x;\ \hat{x}_i,\ \hat{P}_{xx}^{(i)}), \\
    \end{aligned}
\end{equation}
where the updated weights of the $i$-th component are given by:
\begin{equation}
\label{eqn:wplus}
    \begin{aligned}
        w_i^+\ 
        &=\ \frac{w_i^-\ p_i(y)}{p(y)},
    \end{aligned}
\end{equation}
and, 
\begin{equation}
    \begin{aligned}
        p(y)\ 
        &\approx\ \sum_{j=1}^{n}\ w_j^-\ p_j(y),
    \end{aligned}
\end{equation}
\begin{equation}
    \begin{aligned}
        p_i(y)\
        &=\ \mathcal{N}(y;\ h(x_i),\ P_{yy}^{(i)}), \\
    \end{aligned}
\end{equation}
where $x_i$ and $P_{yy}^{(i)}$ are the $i$-th component's current state and measurement innovation covariance, respectively. To compute this covariance, a common technique is to use a first-order Taylor series expansion of the measurement model around the $i$-th component's prior state estimate $\bar{x}_i$:
\begin{equation}
\label{eqn:tse2}
    \begin{aligned}
        h(x_i)\ &\approx\ h(\bar{x}_i)\ +\ \bar{H}_i(x - \bar{x}_i),
    \end{aligned}
\end{equation}
where $\bar{H}_i=H(\bar{x}_i)$ is the Jacobian matrix, which captures the gradient of the measurement model with respect to the state evaluated at the prior estimate. It then follows that

\begin{equation}
\label{eqn:Pyybar}
    \begin{aligned}
        P_{yy}^{(i)}\ 
        &\approx\ \mathbb{E}
        [\bar{\epsilon}_i\ \bar{\epsilon}_i^T] \\
        &=\ \mathbb{E}
        [(y - h(\bar{x}_i))\ (y - h(\bar{x}_i))^T] \\
        &\approx\ \bar{H}_i\bar{P}_{xx}^{(i)}\bar{H}_i^T\ +\ R \\
        &=\ \bar{P}_{yy}^{(i)}.
    \end{aligned}
\end{equation}

The updated weights of the GMM in \eqref{eqn:wplus} are typically computed using this linearization. Linearization around the prior is a common assumption whose implications have not been explored in depth and that can hinder the performance of Gaussian Sum Filters (GSF) \cite{ref:sorenson, ref:alspach, ref:yun2019}, Ensemble Gaussian Mixture Filters (EnGMF) \cite{ref:anderson, ref:silverman, ref:yun2022, ref:popov, ref:popov2022, ref:popov2023, ref:reifler2023, ref:durant}, and other Gaussian Mixture Model (GMM)---type filters \cite{ref:stordal2011, ref:raihan2018-1, ref:raihan2018-2}.

Better weights can be computed by linearizing the measurement model about each component's posterior estimate $\hat{x}_i$ rather than the prior $\bar{x}_i$:

\begin{equation}
\label{eqn:tse3}
    \begin{aligned}
        h(x_i)\ &\approx\ h(\hat{x}_i)\ +\ \hat{H}_i(x - \hat{x}_i),
    \end{aligned}
\end{equation}
where $\hat{H}_i=H(\hat{x}_i)$ is the Jacobian matrix, which now captures the gradient of the measurement model with respect to the state evaluated at the posterior estimate. It then follows that

\begin{equation}
\label{eqn:Pyyhat}
    \begin{aligned}
        P_{yy}^{(i)}\ 
        &\approx\ \mathbb{E}
        [\hat{\epsilon}_i\ \hat{\epsilon}_i^T] \\
        &=\ \mathbb{E}[(y - h(\hat{x}_i))\ (y - h(\hat{x}_i))^T] \\
        &\approx\ \hat{H}_i\hat{P}_{xx}^{(i)} \hat{H}_i^T + R - \hat{H}_iK_iR - (\hat{H}_iK_iR)^T \\
        &=\ \hat{P}_{yy}^{(i)},
    \end{aligned}
\end{equation}

\noindent where $K_i\ =\ \bar{P}_{xx}^{(i)} \bar{H}_i^T\bar{P}_{yy}^{(i)^{-1}}$.

Using the models from \eqref{eqn:tse2} and \eqref{eqn:tse3}, and their associated covariances from \eqref{eqn:Pyybar} and \eqref{eqn:Pyyhat}, two approximations, $\bar{p}(y)$ and $\hat{p}(y)$, are explored in lieu of the elusive true distribution $p(y)$: 

\begin{equation}
\label{eqn:pybar}
    \begin{aligned}
        \bar{p}(y)\ 
        &\approx\ \sum_{j=1}^{n}\ w_j^-\ \bar{p}_j(y),
    \end{aligned}
\end{equation}
\begin{equation}
\label{eqn:pybari}
    \begin{aligned}
        \bar{p}_i(y)\
        &=\ \mathcal{N}(y;\ h(\bar{x}_i),\ \bar{P}_{yy}^{(i)}), \\
    \end{aligned}
\end{equation}
and,
\begin{equation}
\label{eqn:pyhat}
    \begin{aligned}
        \hat{p}(y)\ 
        &\approx\ \sum_{j=1}^{n}\ w_j^-\ \hat{p}_j(y),
    \end{aligned}
\end{equation}
\begin{equation}
\label{eqn:pyhati}
    \begin{aligned}
        \hat{p}_i(y)\
        &=\ \mathcal{N}(y;\ h(\hat{x}_i),\ \hat{P}_{yy}^{(i)}). \\
    \end{aligned}
\end{equation}

This work proposes employing $\hat{p}(y)$ rather than $\bar{p}(y)$ to compute GMM weights and proves that, in linear cases, both approximations yield equivalent results. Additionally, this work empirically demonstrates, in nonlinear cases, improved GMM filter performance when linearizing the measurement model about the posterior estimates. 
The improvement is observed when linearizing the measurement model about the posterior estimates offers a more accurate approximation of the truth compared to linearizing about the prior.

\section{PROOF OF EQUIVALENCE FOR LINEAR MODELS}
\label{sec:linear}

This section proves, for linear measurement models, that using the posterior estimate in the weight update leads to weights that are equivalent to those obtained by using the prior.

\begin{theorem}(Equivalent Weights Under Linear Measurement Models) Consider Gaussian measurement probability distributions $\bar{p}(y)$ and $\bar{p}_i(y)$. These distributions are computed from \eqref{eqn:pybar} and \eqref{eqn:pybari}, respectively, based on prior estimates. Similarly, consider $\hat{p}(y)$ and $\hat{p}_i(y)$, calculated from \eqref{eqn:pyhat} and \eqref{eqn:pyhati}, respectively, utilizing posterior estimates. In light of these considerations, and given the prior weights $w_i^-$, the traditional weights take the form:

\begin{equation}
\label{eqn:wtrad}
    \begin{aligned}
        \bar{w}_i^+\ 
        &=\ \frac{w_i^-\ \bar{p}_i(y)}{\bar{p}(y)},
    \end{aligned}
\end{equation}
and the improved weights have the form:
\begin{equation}
\label{eqn:wstar}
    \begin{aligned}
        \hat{w}_i^+\ 
        &=\ \frac{w_i^-\ \hat{p}_i(y)}{\hat{p}(y)},
    \end{aligned}
\end{equation}
where both $\bar{w}_i^+$ and $\hat{w}_i^+$ share the same prior weights $w_i^-$. The linear case yields $y\ =\ Hx_i\ +\ \eta$, $H(\hat{x}_i)\ =\ H(\bar{x}_i)\ =\ H$ and will give:
\begin{equation}
\label{eqn:w-vs-wstar-LINEAR}
    \begin{aligned}
        \hat{w}_i^+\
        &=\ \frac{w_i^-\ \hat{p}_i(y)}{\hat{p}(y)}\ =\ \frac{w_i^-\ \bar{p}_i(y)}{\bar{p}(y)}\ =\ \bar{w}_i^+. \\
    \end{aligned}
\end{equation}
\end{theorem}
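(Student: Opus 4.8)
The plan is to reduce the asserted equality of the two weight vectors to a single per-component identity. Because $\bar{p}(y)=\sum_j w_j^-\bar{p}_j(y)$ and $\hat{p}(y)=\sum_j w_j^-\hat{p}_j(y)$ are built from the \emph{same} prior weights $w_j^-$, it suffices to show that, under the linear hypotheses $h(x)=Hx$ and $\hat{H}_i=\bar{H}_i=H$, the component measurement marginals agree, $\hat{p}_i(y)=\bar{p}_i(y)$ for every $i$. The two normalizing sums are then equal term by term, so the ratios in \eqref{eqn:wtrad} and \eqref{eqn:wstar} are the same number and $\hat{w}_i^+=\bar{w}_i^+$.

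To prove $\hat{p}_i(y)=\bar{p}_i(y)$, I would first express the posterior residual through the prior innovation. The Kalman update $\hat{x}_i=\bar{x}_i+K_i(y-H\bar{x}_i)$ gives $y-h(\hat{x}_i)=(I-HK_i)(y-H\bar{x}_i)$, and the identity that does the heavy lifting is $I-HK_i=R(\bar{P}_{yy}^{(i)})^{-1}$, which follows from $K_i=\bar{P}_{xx}^{(i)}H^T(\bar{P}_{yy}^{(i)})^{-1}$ together with $\bar{P}_{yy}^{(i)}=H\bar{P}_{xx}^{(i)}H^T+R$. Next I would simplify the posterior innovation covariance \eqref{eqn:Pyyhat}: inserting $\hat{P}_{xx}^{(i)}=(I-K_iH)\bar{P}_{xx}^{(i)}$ and using the same relations, the cross terms collapse and one is left with the compact form $\hat{P}_{yy}^{(i)}=R(\bar{P}_{yy}^{(i)})^{-1}R$ (equivalently $R-H\hat{P}_{xx}^{(i)}H^T$).

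Substituting these into $\hat{p}_i(y)=\mathcal{N}(y;h(\hat{x}_i),\hat{P}_{yy}^{(i)})$, the Mahalanobis exponent becomes $(y-H\bar{x}_i)^T(I-HK_i)^T(\hat{P}_{yy}^{(i)})^{-1}(I-HK_i)(y-H\bar{x}_i)$, and plugging in $I-HK_i=R(\bar{P}_{yy}^{(i)})^{-1}$ and $\hat{P}_{yy}^{(i)}=R(\bar{P}_{yy}^{(i)})^{-1}R$ collapses the middle factor to $(\bar{P}_{yy}^{(i)})^{-1}$, which is exactly the exponent of $\bar{p}_i(y)$ since $h(\bar{x}_i)=H\bar{x}_i$. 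It then remains to reconcile the Gaussian normalizing constants, for which $\det\hat{P}_{yy}^{(i)}=(\det R)^2/\det\bar{P}_{yy}^{(i)}$; the mismatch with $\det\bar{P}_{yy}^{(i)}$ is exactly the Jacobian $\lvert\det(I-HK_i)\rvert=\det R/\det\bar{P}_{yy}^{(i)}$ of the residual map $y\mapsto y-h(\hat{x}_i)$, so once this factor is accounted for the two densities coincide. In particular, in the common-covariance setting typical of EnGMF (all $\bar{P}_{xx}^{(i)}$ equal), this factor is the same for every $i$ and cancels in the weight ratio, giving $\hat{w}_i^+=\bar{w}_i^+$ directly.

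The step I expect to be the main obstacle is this last normalization bookkeeping: the quadratic forms coincide almost mechanically once the two matrix identities are in place, but pinning down the determinant prefactors --- and stating precisely the hypothesis (common component covariance, or the change-of-variables convention for the residual map) under which the bare Gaussian forms are literally identical --- is the delicate point. A secondary item requiring care is the reduction of \eqref{eqn:Pyyhat} to $R(\bar{P}_{yy}^{(i)})^{-1}R$, for which the Joseph form of $\hat{P}_{xx}^{(i)}$ combined with a Woodbury identity is the cleanest route.
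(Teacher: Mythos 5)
Your proposal is correct, but it reaches the result by a different, more self-contained route than the paper. The paper's proof cancels the prior weights, takes logarithms, asserts that the innovation covariances are component-independent ($\hat{P}_{yy}^{(i)}=\hat{P}_{yy}$, $\bar{P}_{yy}^{(i)}=\bar{P}_{yy}$) so that all Gaussian normalizing constants drop out of the ratios, and then reduces the claim to the single Mahalanobis identity $\hat{\epsilon}_i^{T}\hat{P}_{yy}^{-1}\hat{\epsilon}_i=\bar{\epsilon}_i^{T}\bar{P}_{yy}^{-1}\bar{\epsilon}_i$, which it does not prove but cites from Zanetti's earlier work. You instead prove that identity from scratch via $I-HK_i=R\,(\bar{P}_{yy}^{(i)})^{-1}$ and $\hat{P}_{yy}^{(i)}=R\,(\bar{P}_{yy}^{(i)})^{-1}R$ (both of which check out by direct substitution of $\hat{P}_{xx}^{(i)}=(I-K_iH)\bar{P}_{xx}^{(i)}$ and $H\bar{P}_{xx}^{(i)}H^{T}=\bar{P}_{yy}^{(i)}-R$; no Woodbury step is actually needed), and you explicitly track the determinant prefactor, finding $\hat{p}_i(y)=\bar{p}_i(y)\,\det\bar{P}_{yy}^{(i)}/\det R$. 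Two remarks. First, your opening reduction ``it suffices to show $\hat{p}_i(y)=\bar{p}_i(y)$'' is literally false --- the two densities differ by exactly that $y$-independent factor --- but you repair this yourself at the end, and the ``change-of-variables'' language is an unnecessary detour: all that matters is that the factor does not depend on $y$ and cancels in the normalization when it is common across components. Second, your explicit bookkeeping exposes a hypothesis the paper leaves tacit: equality of the normalized weights requires $\det\bar{P}_{yy}^{(i)}$ (equivalently the component prior covariances $\bar{P}_{xx}^{(i)}$) to be the same for all $i$; the paper obtains this by asserting that a state-independent Jacobian yields constant innovation covariances, which silently presumes common component covariances (true for the EnGMF-style kernels it targets, not for a general Gaussian sum). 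So your argument buys a proof that is independent of the external citation and makes the hidden common-covariance assumption explicit, at the cost of slightly heavier matrix algebra; the paper's version is shorter but leans on the cited identity and an implicit assumption.
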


\begin{proof}
Starting from \eqref{eqn:w-vs-wstar-LINEAR}, the prior weights cancel on both sides:
\begin{equation}
    \begin{aligned}
        \frac{\hat{p}_i(y)}{\hat{p}(y)}\ =\ \frac{\bar{p}_i(y)}{\bar{p}(y)}, \\
    \end{aligned}
\end{equation}
then taking the log of both sides yields, 
\begin{align}
\label{eqn:log-ratios}
    \log{\hat{p}_i(y)}\ -\ \log{\hat{p}(y)}\ &=\ \log{\bar{p}_i(y)}\ -\ \log{\bar{p}(y)}.
\end{align}

\noindent Since the measurement Jacobians are not state dependent, this results in constant measurement innovation covariances such that $\hat{P}_{yy}^{(i)} = \hat{P}_{yy}$ and $\bar{P}_{yy}^{(i)} = \bar{P}_{yy}$. Equation \eqref{eqn:log-ratios} simplifies to:
\begin{equation}
    \begin{aligned}
        \frac{1}{2}\ \hat{\epsilon}_i^{T}\ \hat{P}_{yy}^{-1}\ \hat{\epsilon}_i\ +\ \log{\sum_{j=1}^{n} w_j^- \exp{-\frac{1}{2}\ \hat{\epsilon}_j^{T}\ \hat{P}_{yy}^{-1}\ \hat{\epsilon}_j}} \\
        =\ 
        \frac{1}{2}\ \bar{\epsilon}_i^{T}\ \bar{P}_{yy}^{-1}\ \bar{\epsilon}_i\ +\ \log{\sum_{j=1}^{n} w_j^- \exp{-\frac{1}{2}\ \bar{\epsilon}_j^{T}\ \bar{P}_{yy}^{-1}\ \bar{\epsilon}_j}},
    \end{aligned}
\end{equation}
where the innovations $\hat{\epsilon}_i = y - H\hat{x}_i$ and $\bar{\epsilon}_i = y - H\bar{x}_i$. So now, it is sufficient to prove the following to prove \eqref{eqn:w-vs-wstar-LINEAR}:
\begin{equation}
\label{eqn:lin-prove-me}
    \begin{aligned}
        \hat{\epsilon}_i^{T}\ \hat{P}_{yy}^{-1}\ \hat{\epsilon}_i\
        =\ 
        \bar{\epsilon}_i^{T}\ \bar{P}_{yy}^{-1}\ \bar{\epsilon}_i.
    \end{aligned}
\end{equation}

\noindent Reference \cite{ref:zanetti2015} has previously established the proof of \eqref{eqn:lin-prove-me} for linear measurement models. Inherently, this also serves as a natural validation for the proof of \eqref{eqn:w-vs-wstar-LINEAR} in the contributions presented in this work.
\end{proof}

For linear models, we can now conclude that calculating the weights of the posterior distribution from the posterior Gaussian components is equivalent to the traditional method of using the prior Gaussian components, both produce exact Bayesian posteriors for linear, Gaussian Mixture systems.

\section{NONLINEAR MODELS}
\label{sec:nonlinear}

The nonlinear case $y\ =\ h(x_i)\ +\ \eta$ results in $\hat{P}_{yy}^{(i)} \neq \text{const.}$, $\bar{P}_{yy}^{(i)} \neq \text{const.}$, and $H(\hat{x}_i) \neq H(\bar{x}_i)$. This means terms do not simplify as they did for the linear case and we can only conclude that:
\begin{equation}
    \begin{aligned}
        \hat{w}_i^+\
        &=\ \frac{w_i^-\ \hat{p}_i(y)}{\hat{p}(y)}\ \neq\ \frac{w_i^-\ \bar{p}_i(y)}{\bar{p}(y)}\ =\ \bar{w}_i^+. \\
    \end{aligned}
\end{equation}
However, we can demonstrate empirically that:
\begin{equation}
\label{eqn:ibt}
    \begin{aligned}
        \hat{w}_i^+\
        &=\ \frac{w_i^-\ \hat{p}_i(y)}{\hat{p}(y)}\ \ \textit{i.b.t}\ \ \frac{w_i^-\ \bar{p}_i(y)}{\bar{p}(y)}\ =\ \bar{w}_i^+, \\
    \end{aligned}
\end{equation}
where `\textit{i.b.t}' (\textit{is better than}) is used here to mean the new weights $\hat{w}_i^+$ are improved or better than the traditional weights $\bar{w}_i^+$. \footnote{This does not imply that the improved weights are greater than the traditional weights, but only that they have been improved.} This is because the posterior is assumed to be a better approximation of the truth, thus the linearization and by extension the weights are a more accurate representation of the truth. The set of examples in this work provide empirical evidence for improving the weights independently for each GMM component; improving the precision of the overall conditional mean $\mathbb{E}[x|y]$.

\section{COMPUTING THE IMPROVED WEIGHTS}
\label{sec:the-weights}

The improved weights $\hat{w}_i^+$ are:

\begin{empheq}[box={\mymath[colback=white!30,drop lifted shadow, sharp corners]}]{equation}
    \hat{w}_i^+\ \propto\ w_i^-\ \mathcal{N}(y;\ h(\hat{x}_i),\ \hat{P}_{yy}^{(i)}) 
\end{empheq}

And to explicitly compute them:

\begin{equation}
\label{eqn:better-weights}
    \begin{aligned}
        \hat{w}_i^+\ 
        &=\ \frac{w_i^-\ \mathcal{N}(y;\ h(\hat{x}_i),\ \hat{P}_{yy}^{(i)})}{\hat{p}(y)} \\
        &\approx\ \frac{w_i^-\ \mathcal{N}(y;\ h(\hat{x}_i),\ \hat{P}_{yy}^{(i)})}{\sum_{j=1}^{n} w_j^-\ \mathcal{N}(y;\ h(\hat{x}_j),\ \hat{P}_{yy}^{(j)})}.
    \end{aligned}
\end{equation}
The computational expense associated with the above is comparable to computing the weights with the prior. In essence, the computational overhead is minimal, making it practically negligible.

\subsection{Alternate Covariances}
Some alternate equivalent forms of $\hat{P}_{yy}^{(i)}$ for numerical stability:

\begin{align}
        \label{eqn:Pyy-1}
        \hat{P}_{yy}^{(i)}\ &=\ \hat{H}_i\hat{P}_{xx}^{(i)}\hat{H}_i^T + R - \hat{H}_iK_iR - (\hat{H}_iK_iR)^T,  \\
        \label{eqn:Pyy-2}
         &=\ (\hat{H}_i - \bar{H}_i)\hat{P}_{xx}^{(i)}(\hat{H}_i - \bar{H}_i)^T + R \bar{P}_{yy}^{(i)^{-1}} R^T,  \\
        \label{eqn:Pyy-3}
         &=\ (\hat{H}_i - \bar{H}_i)\hat{P}_{xx}^{(i)}(\hat{H}_i - \bar{H}_i)^T\  \\
        &\ \ \ \ \ +\ (I - \bar{H}_i K_i) \bar{P}_{yy}^{(i)} (I - \bar{H}_i K_i)^{T}. \nonumber
\end{align}
Notice that \eqref{eqn:Pyy-1} is not necessarily enforcing semi-positive definiteness, but is enforcing symmetry. Both \eqref{eqn:Pyy-2} and \eqref{eqn:Pyy-3} enforce semi-positive definiteness and symmetry, however, \eqref{eqn:Pyy-2} requires an explicit inversion of $\bar{P}_{yy}$. Therefore, the recommended most numerically stable option is \eqref{eqn:Pyy-3} which this work is coining the ``Joseph Form'' \cite{ref:barshalom2001} of the measurement innovation covariance for the improved weights.

\subsection{Improved Weights in the Update}
In Table \ref{tbl:algo-ekf} is the pseudocode for computing the updates for GMM-type filters using the Extended Kalman Filter (EKF) \cite{ref:gelb}. It provides color coded steps that relate specifically to the filter using traditional weights and the filter using the new weights, GMM(EKF) and GMM(EKF*), respectively. 

\begin{table}[!ht]
    \caption{Pseudocode for the Traditional Weights and the New Weights of GMM-Type Filters with Individual Component EKF Updates; GMM(EKF) and GMM(EKF*)}
    \label{tbl:algo-ekf}
    \centering
    \begin{minipage}{.7\linewidth}
    \hrulefill
    \begin{algorithmic}[H]\\
    \small
    \setstretch{1}
    \State {\centering \hlc[tolblue]{tolwhite}{GMM(EKF)} \hlc[tolwhite]{tolblue}{GMM(EKF*)}  \par}
    \State \textbf{given}\ $\{w_i^-, \bar{x}_i, \bar{P}_{xx}^{(i)}\}_{i=1}^{n}$ $y$, $R$.
    \vskip0.5em
    \State \textbf{step 1.}\ (compute EKF posterior estimates \& weights)
        \State \hskip1em \text{for}\ {$i = 1 \ \text{to} \ n$} \text{do}\
            \State \hskip2em \textbf{step 1.1.}$\boldsymbol{i}$\ (compute individual posterior estimate)
                \State \hskip3em $\bar{H}_i = H(\bar{x}_i)$,
                
                \State \hskip3em $\bar{P}_{yy}^{(i)} = \bar{H}_i \bar{P}_{xx}^{(i)} \bar{H}_i^T + R$, 
                
                \State \hskip3em $K_i = \bar{P}_{xx}^{(i)} \bar{H}_i^T [\bar{P}_{yy}^{(i)}]^{-1}$,  
                \State \hskip3em $\hat{x}_i = \bar{x}_i + K_i [y - h(\bar{x}_i)]$,
                \State \hskip3em $\hat{P}_{xx}^{(i)} = \bar{P}_{xx}^{(i)} - K_i \bar{H}_i \bar{P}_{xx}^{(i)}.$
                \State \hskip3em \hlc[tolwhite]{tolblue}{$\hat{H}_i = H(\hat{x}_i),$}
                \State \hskip3em \hlc[tolwhite]{tolblue}{$\hat{P}_{yy}^{(i)} = (\hat{H}_i - \bar{H}_i)\hat{P}_{xx}^{(i)}(\hat{H}_i - \bar{H}_i)^T$}
                \State \hskip3em \phantom{$\hat{P}_{yy}^{(i)} = $}\hlc[tolwhite]{tolblue}{$+ (I - \bar{H}_i K_i) \bar{P}_{yy}^{(i)} (I - \bar{H}_i K_i)^{T}$.}
            \State \hskip2em \textbf{step 1.2.}$\boldsymbol{i}$\ (compute unnormalized weights)
            
            \State \hskip3em \hlc[tolblue]{tolwhite}{$\bar{w}_i^+ = w_i^- \mathcal{N}(y; h(\bar{x}_i), \bar{P}_{yy}^{(i)}) $.}
            \State \hskip3em \hlc[tolwhite]{tolblue}{$\hat{w}_i^+ = w_i^- \mathcal{N}(y; h(\hat{x}_i), \hat{P}_{yy}^{(i)}) $.}
        \State \hskip1em \text{end for}
    \vskip0.5em
    \State \textbf{step 2.}\ (normalize weights)
        \State \hskip1em \hlc[tolblue]{tolwhite}{$\bar{w}_i^{+} \coloneqq \bar{w}_i^{+}/(\sum_{j=1}^{n} \bar{w}_j^{+})$, for $i = 1,\cdots,n$.}
        \State \hskip1em \hlc[tolwhite]{tolblue}{$\hat{w}_i^{+} \coloneqq \hat{w}_i^{+}/(\sum_{j=1}^{n} \hat{w}_j^{+})$, for $i = 1,\cdots,n$.}
    \vskip0.5em
    \State \textbf{output}\ \hlc[tolblue]{tolwhite}{$\{ \bar{w}_i^+, \hat{x}_i, \hat{P}_{xx}^{(i)}\}_{i=1}^{n}$}  \hlc[tolwhite]{tolblue}{$\{ \hat{w}_i^+, \hat{x}_i, \hat{P}_{xx}^{(i)}\}_{i=1}^{n}$}.
    \end{algorithmic}
    \hrulefill
    \end{minipage}
    
\end{table}

Similarly, this new weights formulation can be expanded to filters other than the Gaussian sum filter. One recently proposed filter, the Bayesian Recursive Update Filter (BRUF), recursively updates the mean and covariance $N$ times using partitioned measurement information \cite{ref:zanetti2015, ref:michaelson2023-1, ref:michaelson2023-2, ref:michaelson2023-3}. Table \ref{tbl:algo-bruf} provides color coded steps for the GMM(BRUF) using traditional weights and the GMM(BRUF*) using the new weights.

\begin{table}[!ht]
    \caption{Pseudocode for the Traditional Weights and the New Weights of GMM-Type Filters with Individual Component BRUF Updates; GMM(BRUF) and GMM(BRUF*)}
    \label{tbl:algo-bruf}
    \centering
    \begin{minipage}{.7\linewidth}
    \hrulefill
    \begin{algorithmic}[H]\\
    \small
    \setstretch{1}
    \State {\centering \hlc[tolred]{tolwhite}{GMM(BRUF)} \hlc[tolwhite]{tolred}{GMM(BRUF*)} \par}
    \State \textbf{given}\ $\{w_i^-, \bar{x}_i, \bar{P}_{xx}^{(i)}\}_{i=1}^{n}$ $y$, $R$, $N$.
    \vskip0.5em
    \State \textbf{step 1.}\ (compute BRUF posterior estimates \& weights)
        \State \hskip1em \text{for}\ {$i = 1 \ \text{to} \ n$} \text{do}\
            \State \hskip2em \textbf{step 1.1.}$\boldsymbol{i}$\ (compute individual posterior estimate)
                \State \hskip3em $\bar{x}_i^{0} = \bar{x}_i$.
                \State \hskip3em $\bar{P}_0^{(i)} = \bar{P}_{xx}^{(i)}$.
                \State \hskip3em \text{for}\ {$j = 1 \ \text{to} \ N$} \text{do}\
                    \State \hskip4em $\tilde{H}_j = H(\bar{x}_i)$,
                    \State \hskip4em $\tilde{P}_{yy}^{(j)} = \tilde{H}_j \bar{P}_{xx}^{(i)} [\tilde{H}_j]^T + NR$, 
                    \State \hskip4em $\tilde{K}_j = \bar{P}_{xx}^{(i)} [\tilde{H}_j]^T [\tilde{P}_{yy}^{(j)}]^{-1}$,  
                    \State \hskip4em $\bar{x}_i \leftarrow \bar{x}_i + \tilde{K}_j  [y - h(\bar{x}_i)]$,
                    \State \hskip4em $\bar{P}_{xx}^{(i)} \leftarrow \bar{P}_{xx}^{(i)} - \tilde{K}_j \tilde{H}_j \bar{P}_{xx}^{(i)},$
                \State \hskip3em \text{end}
                \State \hskip3em $\hat{x}_i = \bar{x}_i$.
                \State \hskip3em $\hat{P}_{xx}^{(i)} = \bar{P}_{xx}^{(i)}.$

            \State \hskip2em \textbf{step 1.2.}$\boldsymbol{i}$\ (compute measurement innovation covariance)
                \State \hskip3em $\bar{H}_i = H(\bar{x}_i^{0})$,
                \State \hskip3em $\bar{P}_{yy}^{(i)} = \bar{H}_i \bar{P}_0^{(i)} [\bar{H}_i]^T + R$.
                \State \hskip3em \hlc[tolwhite]{tolred}{$K_i = \bar{P}_0^{(i)} [\bar{H}_i]^T [\bar{P}_{yy}^{(i)}]^{-1}$,}
                \State \hskip3em \hlc[tolwhite]{tolred}{$\hat{H}_i = H(\hat{x}_i)$,}
                \State \hskip3em \hlc[tolwhite]{tolred}{$\hat{P}_{yy}^{(i)} = (\hat{H}_i - \bar{H}_i)\hat{P}_{xx}^{(i)}(\hat{H}_i - \bar{H}_i)^T$}
                \State \hskip3em \phantom{$\hat{P}_{yy}^{(i)} = $}\hlc[tolwhite]{tolred}{$+ (I - \bar{H}_i K_i) \bar{P}_{yy}^{(i)} (I - \bar{H}_i K_i)^{T}$.}
                
            \State \hskip2em \textbf{step 1.3.}$\boldsymbol{i}$\ (compute unnormalized weights)
                \State \hskip3em \hlc[tolred]{tolwhite}{$\bar{w}_i^+ = w_i^- \mathcal{N}(y; h(\bar{x}_i^0), \bar{P}_{yy}^{(i)}) $.}
                \State \hskip3em \hlc[tolwhite]{tolred}{$\hat{w}_i^+ = w_i^- \mathcal{N}(y; h(\hat{x}_i), \hat{P}_{yy}^{(i)}) $.}

        \State \hskip1em \text{end}
    \vskip0.5em
    \State \textbf{step 2.}\ (normalize weights)
        \State \hskip1em \hlc[tolred]{tolwhite}{$\bar{w}_i^{+} \coloneqq \bar{w}_i^{+}/(\sum_{j=1}^{n} \bar{w}_j^{+})$, for $i = 1,\cdots,n$.}
        \State \hskip1em \hlc[tolwhite]{tolred}{$\hat{w}_i^{+} \coloneqq \hat{w}_i^{+}/(\sum_{j=1}^{n} \hat{w}_j^{+})$, for $i = 1,\cdots,n$.}
    \vskip0.5em
    \State \textbf{output}\ \hlc[tolred]{tolwhite}{$\{ \bar{w}_i^+, \hat{x}_i, \hat{P}_{xx}^{(i)}\}_{i=1}^{n}$}  \hlc[tolwhite]{tolred}{$\{ \hat{w}_i^+, \hat{x}_i, \hat{P}_{xx}^{(i)}\}_{i=1}^{n}$}.
    \end{algorithmic}
    \hrulefill
    \end{minipage}
\end{table}

\subsection{Simulation Results}

The following example will demonstrate empirically that the new GMM weights proposed can lead to improved performance compared to using the traditional weights. The compared filters will execute a single measurement update in two dimensions, shown by Fig.~\ref{fig:avocado}. This approach aims to demonstrate the effectiveness of the new weight update scheme, showcasing its broad applicability across all GMM-type filters and decoupling it from the effects of resampling, pruning, propagation, and so forth. As such, the results presented are relevant to GSF, EnGMF, and other GMM-type filters of a similar nature.

\begin{figure}[!ht]
    \centering
    \includegraphics[clip, trim=1cm 0cm 1cm 0cm, width=0.6\linewidth]{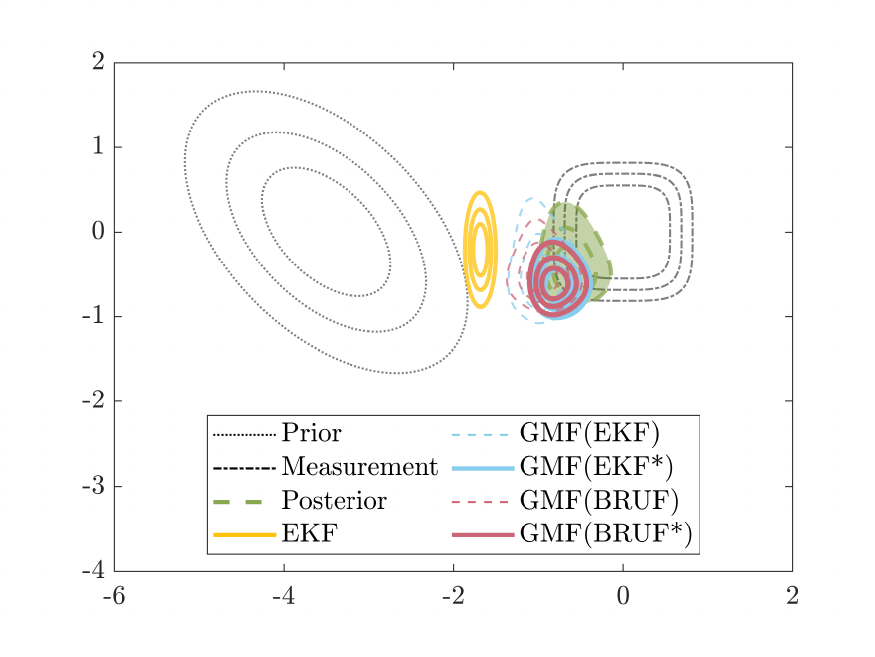}
    \caption{A plot of the Avocado, single update example. Comparing the posterior PDF estimates of the EKF, GMF(EKF), GMF(BRUF), GMF(EKF*), and GMF(BRUF*) averaged over 100 Monte Carlo simulations. The GMM-type filters use 100 components. }
    \label{fig:avocado}
\end{figure}

The prior is expressed as a Gaussuan distribution:
\begin{equation}
    \begin{aligned}
        \mathcal{N} \left ( \begin{bmatrix}-3.5 \\ 0  \end{bmatrix}, \begin{bmatrix} 1 & -\frac{1}{2} \\ -\frac{1}{2} & 1 \end{bmatrix} \right ),
    \end{aligned}
\end{equation}
which is off-center from the origin, bivariate with correlated variables. The nonlinear measurement mapping is of the form
\begin{equation}
    \begin{aligned}
        h(x)\ &=\ 
        \begin{bmatrix}
        (x_1)^2 \\
        (x_2)^2
        \end{bmatrix},
    \end{aligned}
\end{equation}
with a measured value of
\begin{equation}
    \begin{aligned}
        y\ &=\ 
        \begin{bmatrix}
        0 \\
        0
        \end{bmatrix},
    \end{aligned}
\end{equation}
which implies the measured value is not very likely given the prior. Additionally, there is high confidence in the measurement which is assumed corrupted by additive Gaussian noise with zero mean and covariance matrix
\begin{equation}
    \begin{aligned}
        R\ &=\ (0.4^2)I_{2 \times 2},
    \end{aligned}
\end{equation}
where $I_{2 \times 2}$ is the two dimension identity matrix. This example compares the GMF with the new improved weights against the traditional weights. For fair comparison, each GMF has the same number of components, $N$, with uniform prior weights $w_i^- = 1/N$. For notation, the GMF(EKF) is the GSF (or EnGMF, or any GMM-type filter) performing individual EKF updates for each component and is using the traditional weight update described by \eqref{eqn:wtrad}. Conversely, the GMF(EKF*) is the GMM-type filter performing individual EKF updates for each component, but is using the new improved weight update described by \eqref{eqn:wstar} and more explicitly by \eqref{eqn:better-weights}. In similar fashion, we include the GMF(BRUF) and GMF(BRUF*) using the traditional and improved weights, respectively. Lastly, we include the EKF for good measure. The GMF filters (with a sufficient number of components) will outlast a typical EKF for non-Gaussian distributions. 

To assess filter performance, the root mean square error (RMSE) and the Kullback–Leibler divergence (KLD) or relative entropy are used. RMSE is computed by
\begin{equation}
\label{eqn:rmse}
\text{RMSE}\ =\ \sqrt{\frac{1}{n_x} (x - \hat{x})^{T}(x - \hat{x})},
\end{equation}
where $n_x$ is the size of the state-space, $x$ is the truth, and $\hat{x}$ is the weighted sum of each component's posterior mean. KLD is computed by
\begin{equation}
    D_{\text{KL}}(P||Q)\ =\ \frac{1}{s_x}\sum_{x} \frac{1}{2}(\log{P(x|y)}\ -\ \log{Q(x|y)})^2.
\end{equation}
We evaluate the KLD by discretizing it on a $s_x \times s_x$ grid; where $s_x$ is the number of discrete values of $x$ on the grid, and $P(x|y)$ and $Q(x|y)$ are the approximated and true posterior PDF's, respectively. In this example, $P(x|y)$ is the estimated PDF produced by the filter after the update represented by a contour PDF in Fig.~\ref{fig:avocado} and $Q(x|y)$ is the truth represented by the green shaded PDF.

RMSE is a measure of how accurate the state estimate is with respect to the truth. A lower RMSE indicates a more accurate filter. In the context of Table~\ref{tbl:avocado}, using the improved weights, GMF(EKF*) and GMF(BRUF*), both give better accuracy when compared to the EKF and their traditional weight counterparts, GMF(EKF) and GMF(BRUF); suggesting improved performance. 

KLD, on-the-other-hand, is a measure of the dissimilarity between two probability distributions. Though it is not a true distance metric as it is not symmetric and does not satisfy the triangle inequality, it is useful in quantifying how the true probability distribution differs from the approximating distribution. A lower KLD indicates there is less information needed to match the approximated distribution to the truth. Table~\ref{tbl:avocado} shows that using the improved weights can substantially decrease KLD; again improving filter performance. 
\vspace{1em}
\begin{table}[!ht]
\centering
\caption{Numerical output of the single update, two dimensional Avocado example. Comparing the root mean square error and the Kullback–Leibler divergence for the EKF, GMF(EKF), GMF(BRUF), GMF(EKF*), and GMF(BRUF*). Averaged over 100 Monte Carlo simulations. The GMM-type filters use 100 components. }
\normalsize
\begin{tabular}{ccc}
\toprule
            & RMSE  & KLD         \\ \midrule
EKF         & $0.8929$   & ---      \\
GMF(EKF)  & $0.2899$   & $12.594$   \\
GMF(BRUF)  & $0.2874$   & $6.2383$   \\
GMF(EKF*) & $\mathbf{0.2378}$   & $\mathbf{0.8226}$   \\ 
GMF(BRUF*) & $\mathbf{0.2468}$   & $\mathbf{0.6326}$ \\ \bottomrule
\end{tabular}
\label{tbl:avocado}
\end{table}

To aid in our justification, we varied the number of GMM components in the GMF. The RMSE and KLD results vs number of components are shown in Fig.~\ref{fig:avocado-rmse} and Fig.~\ref{fig:avocado-kld}, respectively. Noticeably, the EKF suffers and is not visible in the plotting frames due to it's approximation of Gaussian distributions. This was understood previously from it's poor performance in Fig.~\ref{fig:avocado} and Table~\ref{tbl:avocado}. For the GMM-type filters, with smaller number of components, there is a clear indication of improved performance when using the new weights over the traditional ones. As the number of components increases, a noteworthy trend emerges: the filters utilizing traditional weights, namely GMF(EKF) and GMF(BRUF), appear to gradually converge towards the performance exhibited by the filter employing the improved weights, GMF(EKF*) and GMF(BRUF*). As the number of Gaussian components grows, then the prior covariance of each component becomes smaller, the update of each Gaussian component becomes smaller, and hence the difference between prior and posterior Gaussian components also becomes smaller. In the limit as the number of components goes to infinity, the two should be identical.

\begin{figure}[!ht]
    \centering
    \includegraphics[clip, trim=0cm 0cm 0cm 0cm, width=0.45\linewidth]{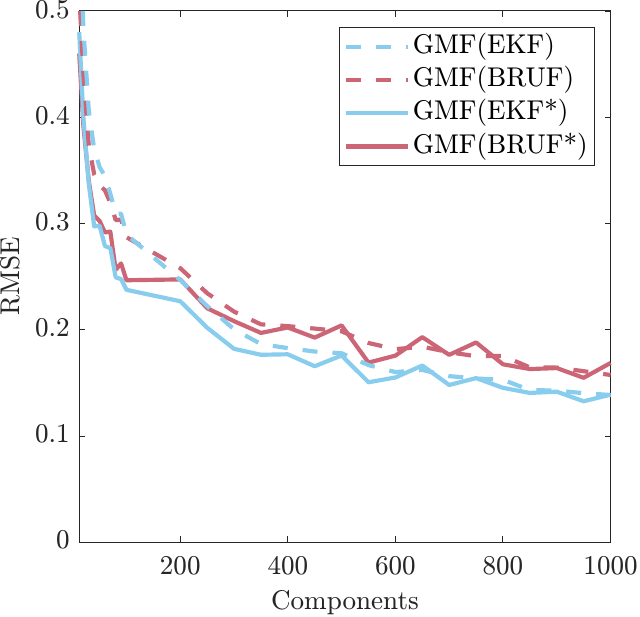}
    \caption{Comparing the impact on state estimate accuracy with-respect-to the truth, represented by root mean square error (RMSE), by varying the number of GMM components in the single update Avocado example. Featuring the EKF, GMF(EKF), GMF(BRUF), GMF(EKF*), and GMF(BRUF*). The EKF results are plotted out of frame and are disregarded. Plots are averaged across 100 Monte Carlo simulations for each GMM component test case. }
    \label{fig:avocado-rmse}
\end{figure}
\begin{figure}[!ht]
    \centering
    \includegraphics[clip, trim=0cm 0cm 0cm 0cm, width=0.45\linewidth]{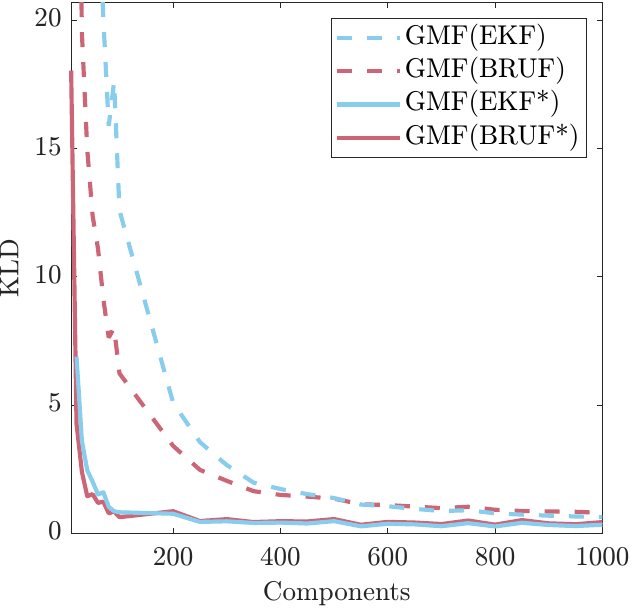}
    \caption{Comparing the impact on PDF accuracy with-respect-to the truth, represented by Kullback–Leibler divergence (KLD), by varying the number of GMM components in the single update Avocado example. Featuring the EKF, GMF(EKF), GMF(BRUF), GMF(EKF*), and GMF(BRUF*). The EKF results are plotted out of frame and are disregarded. Plots are averaged across 100 Monte Carlo simulations for each GMM component test case. }
    \label{fig:avocado-kld}
\end{figure}

\section{EXTENSION TO SIGMA POINT UPDATES}
\label{sec:sigma}
The conventional formulations of traditional and improved weights, as currently presented, exhibit limited applicability, particularly to filters that do not calculate Jacobians, such as sigma point filters. However, an innovative breakthrough arises as effective techniques from importance sampling are incorporated into the methodology. Herein lies a novel approach to performing the GMM weight update using sigma points, encompassing both traditional and improved weight methods. 

Using the unscented transform \cite{ref:julier, ref:van} to generate sigma points $\bar{\chi}_{i_l}$, for $l=0,...,2n_x$ around the prior $p_i(x)$, with mean $\bar{x}_i$ and covariance $\bar{P}_{xx}^{(i)}$, the traditional weights can be formulated by \cite{ref:kottakki2014}
\begin{equation}
    \begin{aligned}
        \bar{w}_i^+ &\approx \frac{w_i^-}{\bar{p}(y)} \mathcal{N}(y;\sum_{l=0}^{2n_x} W_{m_l}h(\bar{\chi}_{i_l}),\bar{P}_{yy}^{(i)}),
    \end{aligned}
\end{equation}
or by 
\begin{equation}
    \label{eqn:w-sigma}
    \begin{aligned}
        \bar{w}_i^+ &\approx \frac{w_i^-}{\bar{p}(y)} \sum_{l=0}^{2n_x} W_{m_l} \mathcal{N}(y;h(\bar{\chi}_{i_l}),\bar{P}_{yy}^{(i)}),
    \end{aligned}
\end{equation}
which both forms represent the weighted contributions of each sigma point to the $ith$ component's measurement distribution.

In this work, we propose alternative approximations that perform better under the numerical examples studied. Using techniques from importance sampling, we can reformulate and make the weight update better. 
Starting from \eqref{eqn:wplus}
\begin{equation}
    \begin{aligned}   
        w_i^+ &= \frac{w_i^- p_i(y)}{p(y)} \\
        &\propto \int_{\mathbb{R}^{n_x}} p_i(x,y) dx \\
        &= \int_{\mathbb{R}^{n_x}} p_i(x) p_i(y|x)dx \\
        &= \mathbb{E}_{x \sim p_i(x)} \left [ p_i(y|x)   \right ].
    \end{aligned}
\end{equation}
By using the unscented transform to generate sigma points $\bar{\chi}_{i_l}$ for $l=0,..., 2n_x$ around the prior $p_i(x)$, the traditional weights become
\begin{equation}
    \label{eqn:w_bar_sigma}
    \begin{aligned}           
        \bar{w}_i^+\ &\approx \frac{w_i^-}{\bar{p}(y)} \sum_{l=0}^{2n_x} W_{m_l} p_i(y|x=\bar{\chi}_{i_l}).
    \end{aligned}
\end{equation}
This approach avoids the need to compute $\bar{P}_{yy}^{(i)}$ for the weights and directly passes the sigma points into the likelihood $p_i(y|x)$.

We can also use this same technique for the new proposed weights.
\begin{equation}
    \begin{aligned}   
        w_i^+ &= \frac{w_i^- p_i(y)}{p(y)} \\
        &\propto \int_{\mathbb{R}^{n_x}} p_i(x,y) dx \\
        &= \int_{\mathbb{R}^{n_x}} p_i(x) p_i(y|x)dx \\
\text{but instead,} \\
        &= \int_{\mathbb{R}^{n_x}} p_i(x) p_i(y|x) \frac{p_i(x|y)}{p_i(x|y)} dx \\
        &= \mathbb{E}_{x \sim p_i(x|y)} \left [ \frac{p_i(x)}{p_i(x|y)} p_i(y|x)   \right ].
    \end{aligned}
\end{equation}
Now, using the unscented transform to generate sigma points $\hat{\chi}_{i_l}$ for $l=0,..., 2n_x$ around the posterior $p_i(x|y)$, with mean $\hat{x}_i$ and covariance $\hat{P}_{xx}^{(i)}$, the new weights become

\begin{equation}
    \label{eqn:w-star-sigma}
    \begin{aligned}           
        \hat{w}_i^+ &\approx \frac{w_i^-}{\hat{p}(y)} \sum_{l=0}^{2n_x} W_{m_l} \frac{p_i(x=\hat{\chi}_{i_l}) p_i(y|x=\hat{\chi}_{i_l})}{p_i(x=\hat{\chi}_{i_l}|y)}.
    \end{aligned}
\end{equation}
Similar to \eqref{eqn:w_bar_sigma}, this avoids the need to compute $\hat{P}_{yy}^{(i)}$ to calculate the weights. 

Table \ref{tbl:algo-sigma} contains the pseudocode for computing the updates for GMM-type filters using sigma point filters. This algorithm can be used for any sigma point filter provided it is in a three parameter formulation \cite{ref:van}. Table \ref{tbl:algo-sigma} provides color coded steps for the GMM($\sigma$) using traditional weights from \eqref{eqn:w-sigma}, and for the GMM($\sigma$*) using the new weights from \eqref{eqn:w-star-sigma}.
\vspace{1em}
\begin{table}[!ht]
    \caption{Pseudocode for the Traditional Weights and the New Weights of GMM-Type Filters with Individual Component Sigma Point Filter Updates; GMM($\sigma$) and GMM($\sigma$*)}
    \label{tbl:algo-sigma}
    \centering
    \begin{minipage}{.7\linewidth}
    \hrulefill
    \begin{algorithmic}[H]\\
    \small
    \setstretch{1}
    \State {\centering \hlc[tolsig]{tolwhite}{GMM($\sigma$)} \hlc[tolwhite]{tolsig}{GMM($\sigma$*)} \par}
    \State \textbf{given}\ $\{w_i^-, \bar{x}_i, \bar{P}_{xx}^{(i)}\}_{i=1}^{n}$ $y$, $R$, $\alpha$, $\beta$, $\kappa$, $n_x$.
    \vskip0.5em
    \State \textbf{step 1.}\ (compute sigma point filter posterior estimates \& weights)
        \State \hskip1em $\lambda_u = \alpha^2 (n_x + \kappa) - n_x$
        \State \hskip1em $W_{m_0} = \lambda_u / (n_x + \lambda_u)$
        \State \hskip1em $W_{c_0} = \lambda_u / (n_x + \lambda_u) + (1 - \alpha^2 + \beta)$
        \State \hskip1em $W_{c_i} = W_{m_i} = 1/(2(n_x+\lambda_u))$, for $i = 1,\cdots,2n_x$.
        \State \hskip1em \text{for}\ {$i = 1 \ \text{to} \ n$} \text{do}\
            \State \hskip2em \textbf{step 1.1.}$\boldsymbol{i}$\ (compute individual posterior estimate)
                \State \hskip3em $\bar{\chi}_i =  [ \bar{x}_i \ \bar{x}_i \pm ((n_x+\lambda_u)\bar{P}_{xx}^{(i)})^{1/2}  ]$
                \State \hskip3em $\bar{y}_i = \sum_{l=0}^{2n_x} W_{m_l} h(\bar{\chi}_{i_l})$,
                \State \hskip3em $\bar{P}_{yy}^{(i)} = \sum_{l=0}^{2n_x} W_{c_l} [h(\bar{\chi}_{i_l}) - \bar{y}_i][h(\bar{\chi}_{i_l}) - \bar{y}_i]^T + R$, 
                \State \hskip3em $\bar{P}_{xy}^{(i)} = \sum_{l=0}^{2n_x} W_{c_l} [\bar{\chi}_{i_l} - \bar{x}_i][h(\bar{\chi}_{i_l}) - \bar{y}_i]^T$,
                \State \hskip3em $K_i = \bar{P}_{xy}^{(i)} 
                [\bar{P}_{yy}^{(i)}]^{-1}$, 
                \State \hskip3em $\hat{x}_i = \bar{x}_i + K_i [y - \bar{y}_i]$,
                \State \hskip3em $\hat{P}_{xx}^{(i)} = \bar{P}_{xx}^{(i)} - K_i \bar{H}_i \bar{P}_{xx}^{(i)}.$
                \State \hskip3em \hlc[tolwhite]{tolsig}{$\hat{\chi}_i = [ \hat{x}_i \ \hat{x}_i \pm ((n_x+\lambda_u)\hat{P}_{xx}^{(i)})^{1/2} ]$.}

            \State \hskip2em \textbf{step 1.2.}$\boldsymbol{i}$\ (compute unnormalized weights)
                \State \hskip3em \hlc[tolsig]{tolwhite}{$\bar{w}_i^+ = w_i^- \sum_{l=0}^{2n_x} W_{m_l} \mathcal{N}  (y; h(\bar{\chi}_{i_l}), \bar{P}_{yy}^{(i)}  ) $.}
                \State \hskip3em \hlc[tolwhite]{tolsig}{$\hat{w}_i^+ = w_i^- \sum_{l=0}^{2n_x} W_{m_l} \mathcal{N} \left (\hat{\chi}_{i_l}; \bar{x}_i, \bar{P}_{xx}^{(i)} \right )$}
                \State \hskip3em \phantom{$\hat{w}_i^+ = $}\hlc[tolwhite]{tolsig}{$\circ \ \mathcal{N} \left (y; h(\hat{\chi}_{i_l}), R \right ) / \mathcal{N} \left (\hat{\chi}_{i_l}; \hat{x}_i, \hat{P}_{xx}^{(i)} \right )$.}

        \State \hskip1em \text{end}
    \vskip0.5em
    \State \textbf{step 2.}\ (normalize weights)
        \State \hskip1em \hlc[tolsig]{tolwhite}{$\bar{w}_i^{+} \coloneqq \bar{w}_i^{+}/(\sum_{j=1}^{n} \bar{w}_j^{+})$, for $i = 1,\cdots,n$.}
        \State \hskip1em \hlc[tolwhite]{tolsig}{$\hat{w}_i^{+} \coloneqq \hat{w}_i^{+}/(\sum_{j=1}^{n} \hat{w}_j^{+})$, for $i = 1,\cdots,n$.}
    \vskip0.5em
    \State \textbf{output}\ \hlc[tolsig]{tolwhite}{$\{ \bar{w}_i^+, \hat{x}_i, \hat{P}_{xx}^{(i)}\}_{i=1}^{n}$}  \hlc[tolwhite]{tolsig}{$\{ \hat{w}_i^+, \hat{x}_i, \hat{P}_{xx}^{(i)}\}_{i=1}^{n}$}.
    \end{algorithmic}
    \hrulefill
    \end{minipage}
\end{table}

\subsection{Simulation Results for Sigma Point Updates}

A single update is performed for the Avocado example, shown in Fig.~\ref{fig:avocado-sigma}, and, similar to before, the results presented are relevant for GSF, EnGMF, and other GMM-type filters of a similar nature. The difference with the prior example is that now the individual Gaussian components are updated using the Unscented Kalman Filter (UKF) and the Cubature Kalman Filter (CKF); two types of sigma point filters. We include the UKF for good measure with tuning parameters $\alpha = 1$, $\beta = 2$, and $\kappa = 3$ \cite{ref:van}. The GMF(UKF) uses the same tuning parameters as the UKF and is using the traditional weight update. Also included is the GMF(CKF), whose components can be expressed as UKFs with parameters $\alpha = 1$, $\beta = 0$, and $\kappa = 0$. Additionally, we showcase these filters with improved weight versions, GMF(UKF*) and GMF(CKF*). The GMF filters (with a sufficient number of components) will outlast a typical UKF for non-Gaussian distributions. 

\begin{figure}[!ht]
    \centering
    \hspace{-1em}
    \includegraphics[clip, trim=1cm 0cm 1cm 0cm, width=0.6\linewidth]{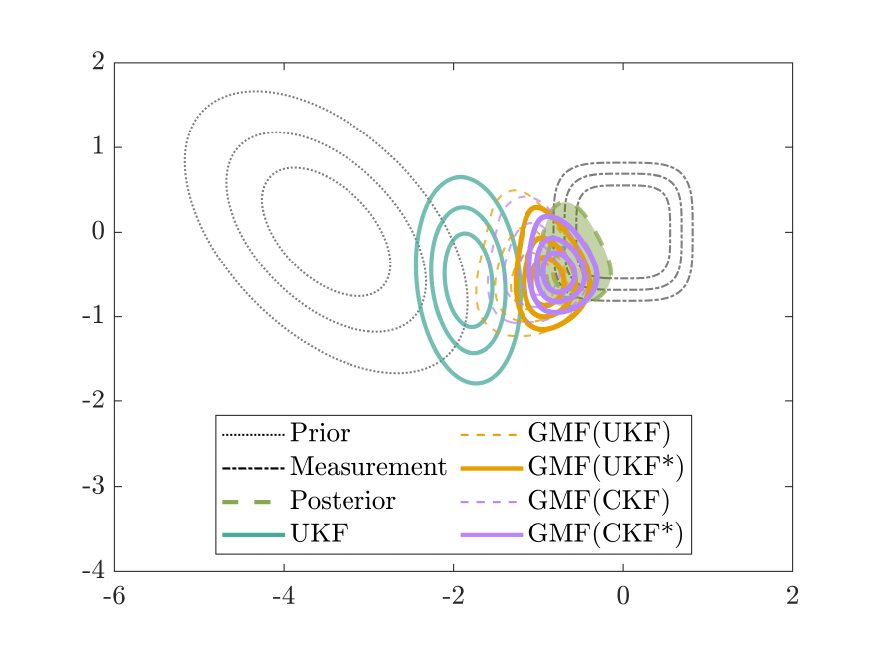}
    \caption{A plot of the Avocado, single update example. Comparing the posterior PDF estimates of the UKF, GMF(UKF), GMF(CKF), GMF(UKF*), and GMF(CKF*) averaged over 100 Monte Carlo simulations. The GMM-type filters use 100 components. }
    \label{fig:avocado-sigma}
\end{figure}

In Table~\ref{tbl:avocado-sigma}, using the improved weights, GMF(UKF*) and GMF(CKF*), both give better accuracy when compared to the UKF and their traditional weight counterparts, GMF(UKF) and GMF(CKF); suggesting improved performance. Additionally, Table~\ref{tbl:avocado-sigma} shows that using the improved weights can substantially decrease KLD; improving filter performance. 
\vspace{1em}
\begin{table}[!ht]
\centering
\caption{Numerical output of the single update, two dimensional Avocado example. Comparing the root mean square error and the Kullback–Leibler divergence for the UKF, GMF(UKF), GMF(CKF), GMF(UKF*), and GMF(CKF*). Averaged over 100 Monte Carlo simulations. The GMM-type filters use 100 components. }
\normalsize
\begin{tabular}{ccc}
\toprule
            & RMSE  & KLD         \\ \midrule
UKF         & $1.0138$   & ---      \\
GMF(UKF)  & $0.4322$   & $93.429$   \\
GMF(CKF)  & $0.3592$   & $46.429$   \\
GMF(UKF*) & $\mathbf{0.2679}$   & $\mathbf{4.0853}$   \\ 
GMF(CKF*) & $\mathbf{0.1770}$   & $\mathbf{0.8941}$ \\ \bottomrule
\end{tabular}
\label{tbl:avocado-sigma}
\end{table}

As done for the prior example, we varied the number of GMM components in the GMF. The RMSE and KLD results vs number of components are shown in Fig.~\ref{fig:avocado-rmse-sigma} and Fig.~\ref{fig:avocado-kld-sigma}, respectively. Noticeably, the UKF suffers and is not visible in the plotting frames due to it's approximation of Gaussian distributions. This poor performance is also reflected in Fig.~\ref{fig:avocado-sigma} and Table~\ref{tbl:avocado-sigma}. For the GMM-type filters, with smaller number of components, there is a clear indication of improved performance when using the new improved weights over the traditional ones. However, for the sigma point filters, as the number of components increases, the filters utilizing traditional weights, namely GMF(UKF) and GMF(CKF), do not converge as rapidly towards the performance exhibited by the filter employing the improved weights, GMF(UKF*) and GMF(CKF*).

\begin{figure}[!ht]
    \centering
    \includegraphics[clip, trim=0cm 0cm 0cm 0cm, width=0.45\linewidth]{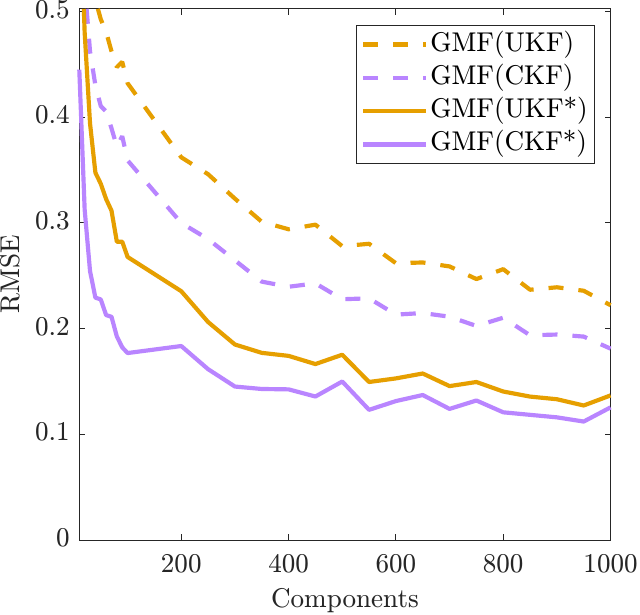}
    \caption{Comparing the impact on state estimate accuracy with-respect-to the truth, represented by root mean square error (RMSE), by varying the number of GMM components in the single update Avocado example. Featuring the UKF, GMF(UKF), GMF(CKF), GMF(UKF*), and GMF(CKF*). The UKF results are plotted out of frame and are disregarded. Plots are averaged across 100 Monte Carlo simulations for each GMM component test case. }
    \label{fig:avocado-rmse-sigma}
\end{figure}
\begin{figure}[!ht]
    \centering
    \includegraphics[clip, trim=0cm 0cm 0cm 0cm, width=0.45\linewidth]{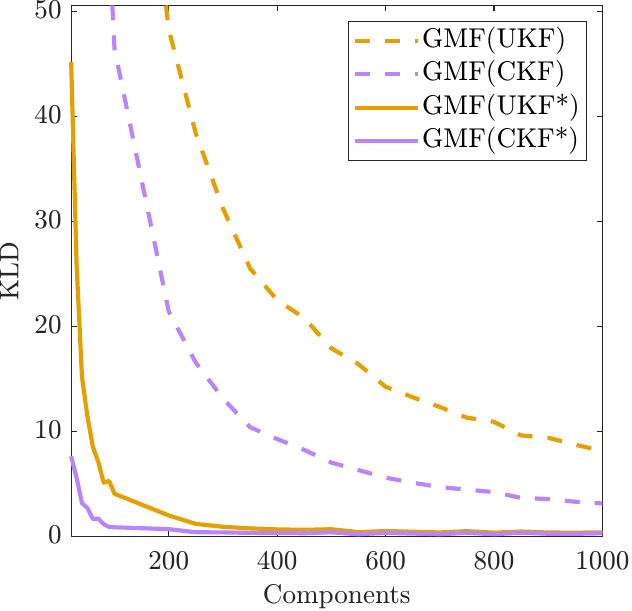}
    \caption{Comparing the impact on PDF accuracy with-respect-to the truth, represented by Kullback–Leibler divergence (KLD), by varying the number of GMM components in the single update Avocado example. Featuring the UKF, GMF(UKF), GMF(CKF), GMF(UKF*), and GMF(CKF*). The UKF results are plotted out of frame and are disregarded. Plots are averaged across 100 Monte Carlo simulations for each GMM component test case. }
    \label{fig:avocado-kld-sigma}
\end{figure}

\section{A CISLUNAR NRHO EXAMPLE}
\label{sec:nrho}
A cislunar Near-Rectilinear Halo Orbit (NRHO) example will now demonstrate the strength of the improved GMM weighting scheme in an aerospace application. It focuses on comparing filters in the complex cislunar NRHO system, known for its chaotic behavior and sensitivity to initial conditions. Minor errors in the initial state or measurements can lead to significant divergence in estimated states, making accurate estimation challenging. The system's inherent nonlinearity can cause non-Gaussian probability distributions for state variables, rendering conventional filters inadequate. To tackle these challenges, advanced techniques like GMM-type filters, particularly the EnGMF, excel. Its robustness in handling the chaotic, nonlinear, and non-Gaussian characteristics of the NRHO system makes it the preferred option for comparative analysis of different weight updates in GMM-type filters.

\subsection{Problem Setup}
A sensitivity study on the accuracy and consistency of each filter is performed for tracking a single target. The system dynamic equations are numerically integrated with an embedded Runge-Kutta 8(7) method \cite{ref:dormand}. Angle measurements are simulated using an Earth-based optical ground telescope mapped to the Barycenter of the system (see Fig.~\ref{fig:nrho-trajectory} for an illustration). In this simulation, tracking passes occur intermittently, with raw angle measurements available every 10 minutes. These measurements are grouped into 3 tracklets for each orbit, each lasting 2.5 hours. To simulate periods when the Moon is not visible, a break equivalent to a quarter of the orbit's period is inserted between each tracklet. This cycle repeats for 5 orbits ($\sim 30$ days): 2.5 hours of sensing at a 10-minute interval, followed by propagation for a quarter of the period, before resuming the sensing cycle for the next tracklet and/or orbit.

\begin{figure}[H]
    \centering
    \includegraphics[clip, trim=0cm 0cm 0cm 0cm, width=0.4\linewidth]{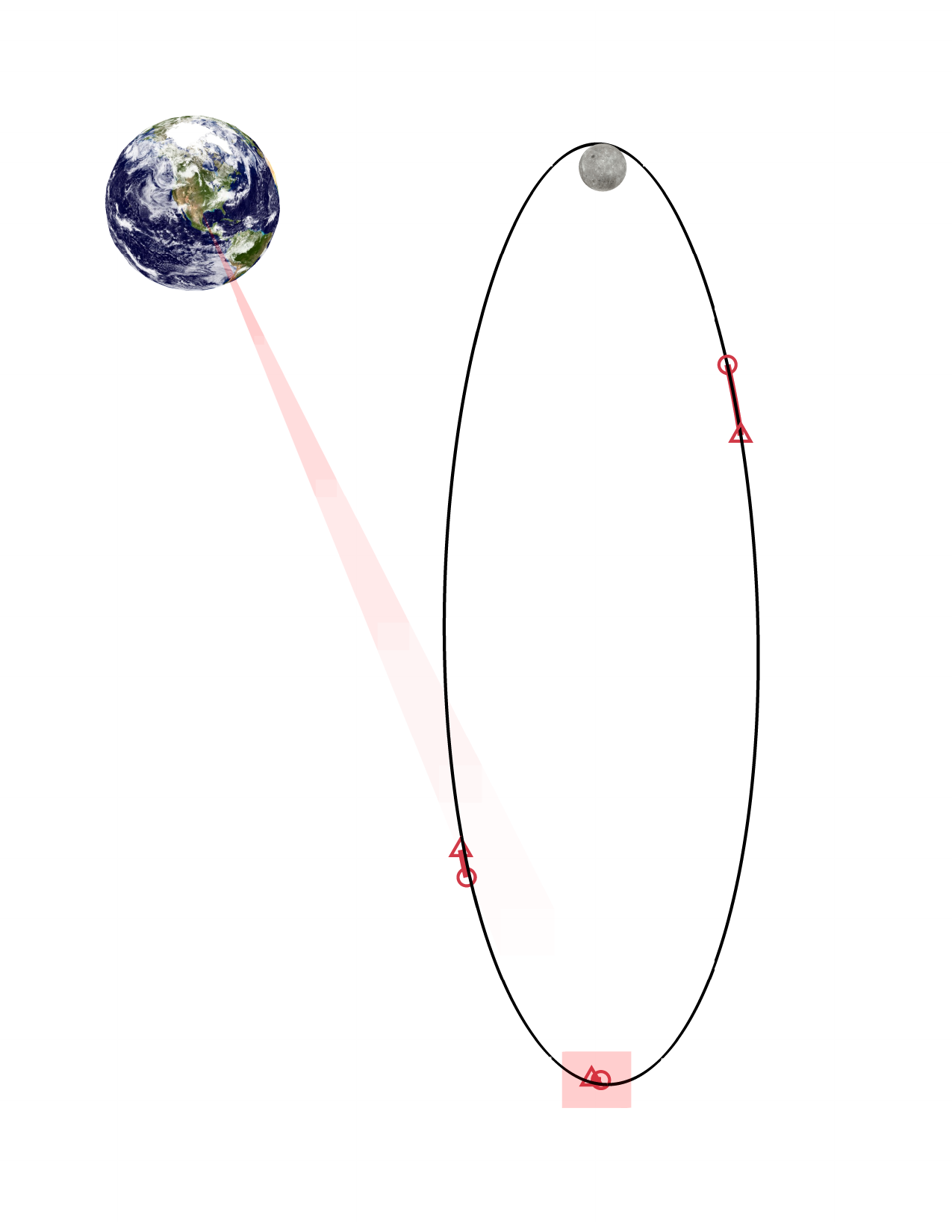}
    \caption{A single target in a Near Rectilinear Halo Orbit trajectory (black) with Earth-based optical ground sensing (\textit{objects not to scale}). There are 3 sensing windows per orbit lasting 2.5 hours each (red) denoted by a start (circle) and stop (triangle). The single target is tracked for 5 orbits $\sim$ 30 days.}
    \label{fig:nrho-trajectory}
\end{figure}

\subsubsection{Dynamics Model.}
This work models cislunar NRHO dynamics using the Circular Restricted Three Body Problem (CR3BP) for the Earth-Moon system with a 6 dimensional state-space represented by $x = [r_1, r_2, r_3, v_1, v_2, v_3]^{\mathrm{T}}$:
\begin{equation}
    \begin{aligned}
        \dot{r}_1 &= v_1, \ \ \ \ \ \
        \dot{r}_2 = v_2, \ \ \ \ \ \
        \dot{r}_3 = v_3, \\
        \dot{v}_1 &= r_1 + 2v_2 - \frac{(1-\mu)(r_1+\mu)}{r_{\Earth}^3} - \frac{\mu (r_1-1+\mu)}{r_{\leftmoon}^3}, \\
        \dot{v}_2 &= r_2 - 2v_1 - \frac{(1-\mu) r_2}{r_{\Earth}^3} - \frac{\mu r_2}{r_{\leftmoon}^3}, \\
        \dot{v}_3 &= -\frac{(1-\mu) r_3}{r_{\Earth}^3} - \frac{\mu r_3}{r_{\leftmoon}^3},
    \end{aligned}
\end{equation}
where $r_1,r_2,r_3$ and $v_1,v_2,v_3$ represent the scaled Cartesian positions and velocities of the satellite with-respect-to the Barycenter origin, $\mu$ is the scaled Moon geocentric gravitational constant, and $r_{\Earth}$ and $r_{\leftmoon}$ are the distances of the satellite with-respect-to the Earth and Moon in the Barycenter reference frame:
\begin{align}
    \mu\ &=\ \frac{\mu_{\leftmoon}}{\mu_{\Earth} + \mu_{\leftmoon}}, \\
    r_{\Earth}\ &=\ \sqrt{(r_1+\mu)^2\ +\ r_2^2\ +\ r_3^2}, \\
    r_{\leftmoon}\ &=\ \sqrt{(r_1-1+\mu)^2\ +\ r_2^2\ +\ r_3^2}.
\end{align}
In this work, $\mu_{\Earth} = G\cdot m_{\Earth}$ and $\mu_{\leftmoon} = G\cdot m_{\leftmoon}$. The gravitational constant is $G = 6.6743e-11$ [$\text{m}^3\text{s}^{-2}\text{kg}^{-1}$], the mass of the Earth is $m_{\Earth} = 5.972e24$ [kg], and the mass of the Moon is $m_{\leftmoon} = 7.342e22$ [kg]. The units for distance and time were non-dimensionalized by length units $\text{LU} = 384400e3$ [m] and time units $\text{TU} = \sqrt{\text{LU}^3 /\ (\mu_{\Earth} + \mu_{\leftmoon})}$ [s].

\subsubsection{Measurement Model.}
The raw measurement vector $\mathrm{y} = [\alpha, \ \delta]^{T}$ contains right ascension $\alpha$ and declination $\delta$ of the observed target as seen by an Earth-based optical ground telescope mapped to the Barycenter origin. The right ascension $\alpha$ and declination $\delta$ are
\begin{equation}
    \alpha\ =\ \tan^{-1} \left (\frac{r_2\ -\ r_{S_2}}{r_1\ -\ r_{S_1}} \right ),
\end{equation}
\begin{equation}
    \delta\ =\ \sin^{-1} \left (\frac{r_3\ -\ r_{S_3}}{ || [r_1, r_2, r_3]^{\mathrm{T}} - \boldsymbol{r_{S}} ||_2 } \right ),
\end{equation}
where $||\cdot||_2$ is the Euclidean 2-norm, and $\boldsymbol{r_{S}} = [r_{S_1}, r_{S_2}, r_{S_3}]^{\mathrm{T}}$ is the position of the telescope, which, in this work, is at the Barycenter of the system, defined to be the origin: $\boldsymbol{r_{S}} = [0, 0, 0]^{\mathrm{T}}$. The measurements are corrupted by additive zero-mean Gaussian white noise with standard deviation of 16.1 arc-seconds for both the right ascension and declination observations. Light travel time delay and measurement biases are not considered.

\subsubsection{Initial Conditions.} The target's truth and each EnGMF GMM component are initialized by the same distribution centered at the non-dimensional coordinates \cite{ref:spreen2021} and covariance \cite{ref:boonee2023}

\begin{align}
    \label{eqn:x0}
    x_0\ &=\ [1.0110350588,\ \   0,\ \  -0.1731500000,\ \\  
    &\phantom{=\ [} \ 0,\  \ -0.0780141199,\  \ 0]^{\mathrm{T}} \nonumber \\ 
    \label{eqn:P0}
    P_0\ &=\ \text{diag}([2.5e-5,\ \ 2.5e-5,\ \ 2.5e-5,\ \\
    &\phantom{=\ [} \ 1e-6,\ \ 1e-6,\ \ 1e-6]^2). \nonumber
\end{align}
The target is then propagated by 3 quarters of a period along it's trajectory after it's creation from \eqref{eqn:x0} and \eqref{eqn:P0}. This results in a new non-Gaussian initial condition due to the dynamics. The non-dimensional period of each target is roughly $1.3632096570$ from Reference~\cite{ref:spreen2021}.

\subsection{Estimation Criterion}
The accuracies of the methods are examined by their RMSE, which is computed from the true and estimated states at each measurement update for all Monte Carlo simulations. The RMSE is defined by \eqref{eqn:rmse} and is averaged over every time step and every Monte Carlo resulting in a single data point for each method. 

Additionally, the consistencies of the methods are examined using the scaled normalized estimation error squared (SNEES) which is defined as follows
\begin{equation}
    \text{SNEES}\ =\ \frac{1}{n_x} (x - \hat{x})^{T} \hat{P}^{-1} (x - \hat{x}),
\end{equation}
where similar to the RMSE, $n_x$ is the size of the state-space, $x$ is the truth, and $\hat{x}$ is the posterior state estimate. These also are averaged over every time step and every Monte Carlo resulting in a single SNEES data point for each method. 

Ideally the RMSE is as small as possible. Any value greater than zero suggests inaccuracies with respect to the truth. The size of the state-space $n_x = 6$ is used to scale the NEES value \cite{ref:barshalom2001} resulting in the SNEES. A SNEES value of 1 means good filter and tracking consistency. Anything less than 1 indicates that the filter is conservative and anything greater than 1 indicates that the filter is too confident.

\subsection{Results}
The Monte Carlo, time averaged RMSE and SNEES are computed for each EnGMF filtering implementation. As before, we varied the number of GMM components. The RMSE and SNEES results vs number of components are shown in Fig.~\ref{fig:nrho-rmse} and Fig.~\ref{fig:nrho-snees}, respectively. 

\begin{figure}[!ht]
    \centering
    \includegraphics[clip, trim=0cm 0cm 0cm 0cm, width=0.39\linewidth]{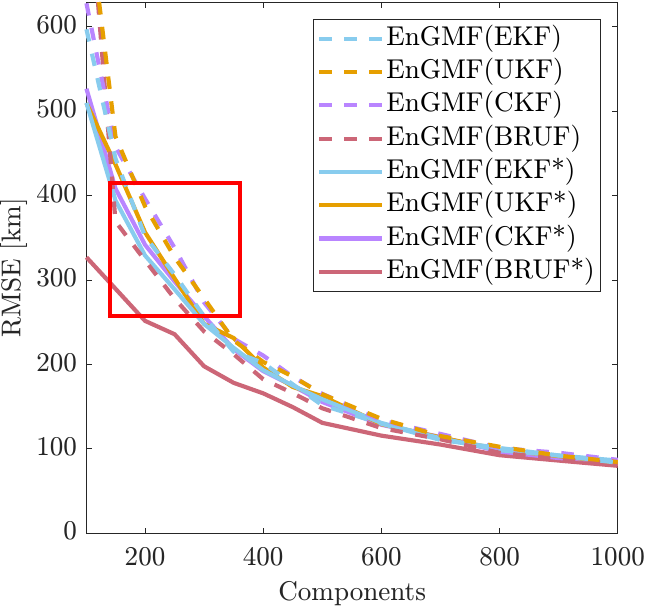}
    \includegraphics[clip, trim=0cm 0cm 0cm 0cm, width=0.4\linewidth]{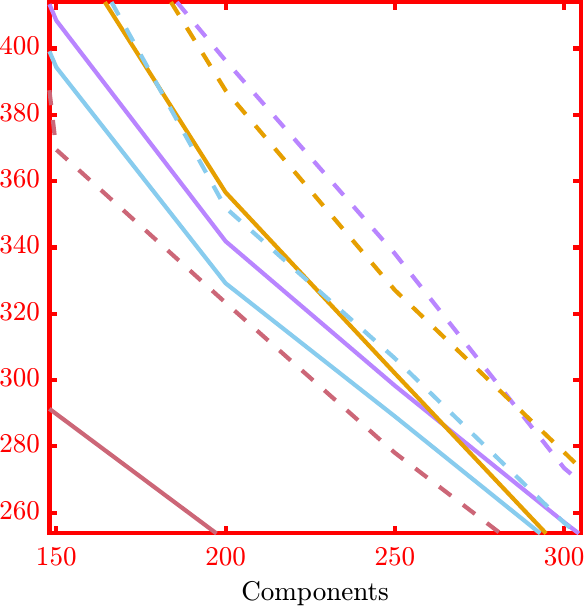}
    \caption{Comparing the impact on state estimate position accuracy with-respect-to the truth, represented by root mean square error (RMSE), by varying the number of EnGMF components in the cislunar NRHO example. Featuring the filters with traditional weights: EnGMF(EKF), EnGMF(UKF), EnGMF(CKF), and EnGMF(BRUF); and the filters with improved weights: EnGMF(EKF*), EnGMF(UKF*), EnGMF(CKF*), and EnGMF(BRUF*). Plots are averaged across 100 Monte Carlo simulations and 5 orbits $\sim 30$ days for each GMM component test case. A zoomed-in area is provided for clarity (red box). A good RMSE drives to zero.}
    \label{fig:nrho-rmse}
\end{figure}
\begin{figure}[!ht]
    \centering
    \includegraphics[clip, trim=0cm 0cm 0cm 0cm, width=0.39\linewidth]{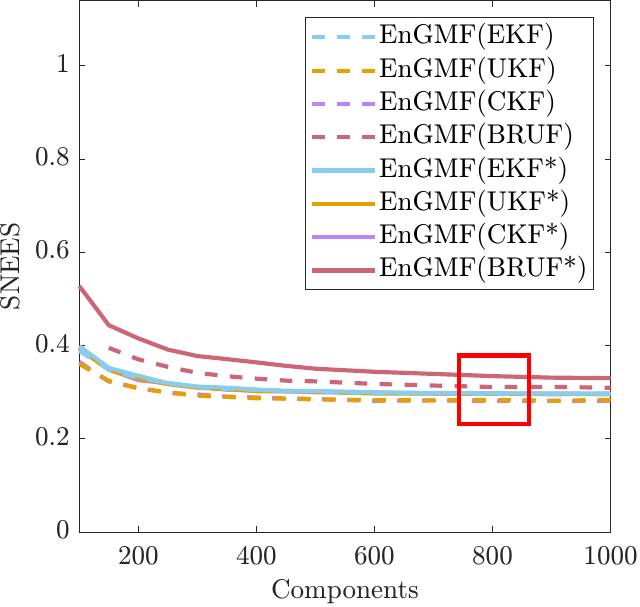}
    \includegraphics[clip, trim=0cm 0cm 0cm 0cm, width=0.4\linewidth]{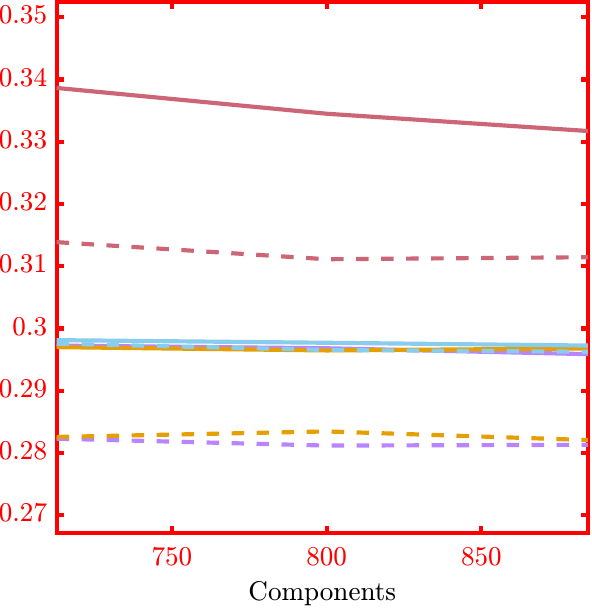}
    \caption{Comparing the impact on state estimate consistency with-respect-to the truth, represented by scaled normalized estimation error squared (SNEES), by varying the number of EnGMF components in the cislunar NRHO example. Featuring the filters with traditional weights: EnGMF(EKF), EnGMF(UKF), EnGMF(CKF), and EnGMF(BRUF); and the filters with improved weights: EnGMF(EKF*), EnGMF(UKF*), EnGMF(CKF*), and EnGMF(BRUF*). Plots are averaged across 100 Monte Carlo simulations and 5 orbits $\sim 30$ days for each GMM component test case. A zoomed-in area is provided for clarity (red box). A good SNEES sustains closer to, but below 1.}
    \label{fig:nrho-snees}
\end{figure}

In Fig.~\ref{fig:nrho-rmse}, with smaller number of components, there is an indication of improved performance when using the new weights over the traditional ones. As the number of components increases, the accuracies of the filters gradually converge towards each other no matter the weighting or update method; suggesting the difference between prior and posterior becomes smaller in the limit as the number of components goes to infinity. 

Although the accuracies seem to converge, the consistencies do not. In Fig.~\ref{fig:nrho-snees}, with larger number of components, there is a clear indication of improved consistency performance when using the new weights over the traditional ones. As the number of components increases, the consistencies of the filters stabilize; suggesting the new weights provide better posterior GMM reconstruction. This was indicated already when comparing the KLD in the single update examples in Fig.~\ref{fig:avocado-kld} and Fig.~\ref{fig:avocado-kld-sigma}.

\section{CONCLUSION}
\label{sec:summary}
This work introduces a novel method for computing weights in Gaussian mixture-type filters, demonstrating its equivalence to traditional methods for linear models and its superior performance for nonlinear cases. 
We showcase the wide-ranging versatility of the new GMM update weights by introducing formulations for established methods like the EKF and more contemporary approaches such as the BRUF.

Additionally, we propose a novel approach for conducting the GMM weight update using sigma points, accommodating both traditional and improved weight methods. Through empirical evaluations on the Avocado and cislunar NRHO examples, we showcase the practical advantages of our approach, surpassing traditional methods in terms of accuracy and consistency.

Moreover, our exploration of varying the number of Gaussian mixture components provides valuable insights into filter performance under different configurations. We observe improved accuracy and filter consistency, particularly in scenarios with fewer components, highlighting the practical applicability and resource effectiveness of our method.

Furthermore, future research could delve into the validity of \eqref{eqn:ibt}, which assumes that linearization about the posterior offers a better approximation of the truth compared to the prior. Investigating the performance changes when this assumption is not met could offer further insights into the method's robustness and potential areas for enhancement.




\section*{Acknowledgments}
This material is based on research sponsored by the Air Force Office of Scientific Research (AFOSR) under agreement number FA9550-23-1-0646, \textit{Create the Future Independent Research Effort (CFIRE)}.

\bibliographystyle{unsrtnat}
\bibliography{diss}

\end{document}